\def\In{\mathop{\rm In}\nolimits}%
\def\Out{\mathop{\rm Out}\nolimits}%
\newcommand{\Ac}{\mathcal{A}}
\newcommand{\Ec}{\mathcal{E}}
\newcommand{\Fc}{\mathcal{F}}
\newcommand{\Mc}{\mathcal{M}}
\newcommand{\Nc}{\mathcal{N}}
\newcommand{\Rc}{\mathcal{R}}
\newcommand{\Sc}{\mathcal{S}}
\newcommand{\Vc}{\mathcal{V}}
\newcommand{\Wc}{\mathcal{W}}
\newcommand{\Xc}{\mathcal{X}}
\newcommand{\Yc}{\mathcal{Y}}
\newcommand{\Xv}{{\bf X}}
\newcommand{\xv}{{\bf x}}
\newcommand{\rv}{{\bf r}}
\newcommand{\Wv}{{\bf W}}
\newcommand{\Mv}{{\bf M}}
\newcommand{\Mh}{{\hat{M}}}
\def\a{\alpha}
\def\b{\beta}
\def\d{\delta}
\def\e{\epsilon}
\let\P\relax
\DeclareMathOperator\P{P}
\newcommand\ie{i.e.,\xspace}
\def\textiid{i.i.d.\@\xspace}
\newcommand\iid{\ifmmode\text{ i.i.d. } \else \textiid \fi}
\newtheorem{theorem}{Theorem}
\newtheorem{corollary}{Corollary}
\begin{document}

\title{	On the Impact of a Single Edge on the Network Coding Capacity}

\author{\IEEEauthorblockN{Shirin Jalali}
\IEEEauthorblockA{Center for Mathematics of Information\\
California Institute of Technology\\
shirin@caltech.edu}\and
\IEEEauthorblockN{Michelle Effros}
\IEEEauthorblockA{Departments of Electrical Engineering\\
California Institute of Technology\\
effros@caltech.edu} \and
\IEEEauthorblockN{Tracey Ho}
\IEEEauthorblockA{Department of Electrical Engineering\\
California Institute of Technology\\
tho@caltech.edu} 
}

\maketitle

\newcommand{\p}{\mathds{P}}
\newcommand{\mb}{\mathbf{m}}
\newcommand{\bb}{\mathbf{b}}

\begin{abstract}
In this paper, we study the effect of a single link on the capacity of a network of error-free bit pipes.  More precisely, we study the change in network capacity that results 
when we remove a single link of capacity $\delta$. 
In a recent result, we proved that if all the sources are directly available to a single super-source node, then removing a link of capacity $\delta$ cannot change the capacity region of the network by more than $\delta$ in each dimension. 
In this paper, we extend this result to
the case of multi-source, multi-sink networks
for some special network topologies.

\end{abstract}

\section{Problem Statement}

Consider a communication problem 
defined by a network, 
a collection of sources,  and 
a collection of sinks.  
The network is a directed graph with nodes 
representing communication devices 
and edges representing 
error-free, point-to-point communication channels 
with finite capacities.  
The sources are independent data streams, 
and each is available to precisely one node in the network.  
Each sink is a node in the network 
that desires some subset of the data streams; 
the desired subset may differ from one sink to the next.  
The capacity of the network, 
also called the ``network coding capacity,'' 
describes the set of achievable rates 
for every possible combination of sources and sinks. 
Solving for the capacity is a challenging open problem. 
In this paper, we investigate a simpler question: 
what is the effect of a single link 
on the network coding capacity of such a network? 
Specifically, 
we wish to understand whether decreasing the capacity 
of a single edge $e$ from $C_e\geq\d$ to $C_e-\d$ 
can change the capacity region of the network 
by more than $\d$ in each dimension.

In \cite{EffrosH:10}, we posed this question 
and proved that if all sources are available at one node, 
then changing the capacity of a single link by $\d$ 
reduces each achievable rate vector by at most $\d$ 
in each dimension. 
In this paper, we extend this result 
to a family of multi-source, multi-sink networks.  

\section{Notation}

Throughout the paper, finite sets are denoted by script letters such as $\Xc$ and $\Yc$. The size of a finite set $\Ac$ is denoted by $|\Ac|$. Random variables are denoted by upper case letters such as $X$ and $Y$. We represent the alphabet of  random variable $X$  by $\Xc$.  Bold letters, for example $\Xv=(X_1,\ldots,X_n)$ and $\xv=(x_1,\ldots,x_n)$ represent vectors. The length of a vector is implied in the context, and its $\ell^{\rm th}$ element is denoted by $X_{\ell}$.  For a set $\Fc\subseteq\{1,2,\ldots,n\}$, $\xv_{\Fc}=(x_i)_{i\in\Fc}$, where the elements are sorted in ascending order of their indices. For a vector $\Xv\in\mathds{R}^n$, let $\Xv^+=\max(\mathbf{0},\Xv)$, where $\mathbf{0}$ is a zero-valued vector of length $n$, and the $\max$ operator is applied component-wise.

\section{System Model}\label{sec:model}

Consider an acyclic error-free network $\Nc$ 
denoted by a directed graph $G=(\Vc,\Ec)$
with nodes $\Vc$ and edges $\Ec\subseteq\Vc\times\Vc$.  
Each edge $e=(v_1,v_2)\in\Ec$  represents 
an error-free channel from Node $v_1$ to Node $v_2$.  
We use $C_e>0$ to denote that channel's capacity.  
For each node $v\in\Vc$, 
$\In(v)=\{(v_1,v):(v_1,v)\in\Ec\}$ and 
$\Out(v)=\{(v,v_1):(v,v_1)\in\Ec\}$ 
denote the set of incoming and outgoing edges 
for Node $v$ respectively. 

Let $\Sc=\{1,2,\ldots,k\}$ denote the set of sources 
available in the network, 
and let  
\[
\a:\Sc\to\Vc,
\] 
specify the source availability.  
Thus for each $s\in\Sc$, 
$\a(s)$ describes the unique node where source $s$ is available.  
Likewise, for each $v\in\Vc$, 
let $\sigma(v)\subseteq\Sc$  denote the set of sources 
observed by Node $v$, \ie \[\sigma(v)=\{s: \a(s)=v\}.\] 
Finally, for each $v\in\Vc$, let $\b(v)\subseteq\Sc$ 
denote the set of sources  that Node $v$ is interested in recovering.

A network code of block length $n$ 
and rate $\mathbf{R}=(R_s)_{s\in\Sc}$ 
over such a network is described as follows. 
Each source $s\in\Sc$ generates some message 
$M_s\in\Mc_s=\{1,2,\ldots,2^{nR_s}\}$. 
For each $e\in\Ec$, let $\Wc_e=\{1,2,\ldots,2^{nC_e}\}$. 
The coding operations performed by each node can be categorized as follows
\begin{enumerate}
\item Encoding functions:\\  For each $v\in\Vc$ and $e\in\Out(v)$, 
the encoding function corresponding to Edge $e$ is a mapping
\begin{align}
g_e:&\prod_{s\in\sigma(v)}\Mc_{s}
\times\prod\limits_{e'\in\In(v)}\Wc_{e'}\to\Wc_e.\nonumber
\end{align}
\item Decoding functions:\\ For each $v\in\Vc$ and $s\in\b(v)$, 
the decoding function for source  $s$ at Node $v$ is a mapping
\begin{align}
g_v^{s}:\prod_{s'\in\sigma(v)}\Mc_{s'}\times\prod\limits_{e\in\In(v)}
\Wc_e\to\Mc_s.\nonumber
\end{align}
\end{enumerate}
A rate vector $\mathbf{R}=(R_s)_{s\in\Sc}$ is said to be achievable on network $\Nc$, if for any $\e>0$, there exists a block length  $n$ large enough and a coding scheme of block length $n$ operating at rate $\mathbf{R}$ such that for all $v\in\Vc$ and $s\in\b(v)$
\begin{align}
\P(\hat{M}_s^{(v)}\neq M_s )\leq \e,\nonumber
\end{align}
where $\hat{M}_s^{(v)}$ denotes the reconstruction 
of message $M_s$ at Node $v$. 
For sources $\Sc$, availability mapping $\a(\cdot)$, 
and demand mapping $\b(\cdot)$, 
let $\Rc(\Nc,\Sc,\a,\b)$ denote the set of achievable rates 
on Network $\Nc$.

In the discussion that follows, 
we use $\Nc$ to describe the original network 
and $\Nc'$ to describe the new network that results 
when we reduce the capacity of a single, fixed edge $e\in\Ec$ 
from $C_e\geq\d$ to $C'_e=C_e-\d$.  
If $C_e=\d$, then edge $e$ is removed from $\Nc$ to obtain $\Nc'$.

\section{Prior Work}\label{sec:prior}

Network codes are communication schemes in which every node is allowed to perform arbitrary functions on its inputs in creating its outputs. The idea was first proposed by Ahlswede, Cai, Li, and Yeung in 2000~\cite{Ahlswede:00}. 
They proved that  Ford and Fulkerson's famous max-flow min-cut theorem for unicast networks \cite{FordF:56}, also holds in multicast networks. (Here a ``unicast network'' refers to a network with a single source and a single sink node, 
while a ``multicast network'' refers to a network with one source 
and multiple sink nodes, 
each requiring all data available at the source.) 
While it is always possible to achieve the capacity in a unicast network using only routing at the relay nodes, Ahlswede et al. showed that there exist networks where coding is required to achieve the multicast capacity.  
 Linear coding operations suffice for achieving the capacity of a multicast network by \cite{LiY:2003}. While both  the capacity region and the structure of capacity-achieving codes  are known for multicast demands, 
neither the capacity nor a low-complexity family of codes 
sufficient for achieving the capacity 
is known for most demand types. 
Linear codes are insufficient 
for achieving the capacity under general demands by \cite{DoughertyF:05}. 
%

Computing the capacity region of an error-free network can be cast as a convex optimization problem with a linear cost function over the space of \emph{normalized entropic vectors} with some other linear constraints \cite{HassibiS:07}\cite{YanY:07}. This characterization reveals that network information theory problems over noiseless networks could be solved if we could explicitly characterize the set of entropy vectors. While there has been a lot of effort in recent years 
geared towards developing a better understanding 
of the set  of entropy vectors 
(c.f. \cite{Yeung:97,ZhangY:97, ChanY:99, ChanY:02,DoughertyF:06,Matus:07}), 
to date the problem remains largely unsolved.

In this paper, we study the problem from a different perspective. Instead of trying to find the capacity region of a network, we focus on the effect 
of a single link on that capacity region. 
Precisely, we try to understand the effect 
on network capacity 
of changing the capacity on a single edge $e\in\Ec$ 
from $C_e\geq \d$ to $C_e'=C_e-\d$, 
which effectively changes just one linear constraint 
in the problem as described above.  

\section{Results}

Before stating our main result  in Section \ref{sec:results}, 
we briefly review some cases where the impact, 
in terms of network capacity, 
of reducing $C_e$ is already known or straightforward to characterize.  

\subsection{Demand Types with Tight Cut-Set Bounds}

For a variety of demand types, including multicast, multi-source multicast,  single-source with non-overlapping demands, and   single-source with non-overlapping demands and a multicast demand, network coding capacity can be fully characterized by the corresponding cut-set bounds~\cite{KoetterM:03}. 
Reducing $C_e$ to $C_e-\delta$ for a single edge $e\in\Ec$ 
reduces the capacity of every cut by at most $\d$. 
Therefore, if $(\Sc,\a,\b)$ describes any such demand type, 
and $\mathbf{R}\in\Rc(\Nc,\Sc,\a,\b)$, 
then $(\mathbf{R}-\d\cdot\mathbf{1})^+\in\Rc(\Nc',\Sc,\a,\b)$, 
where $\Nc'$ is the modified network, 
as described in Section~\ref{sec:model}, and 
$\mathbf{1}$ is the all-ones vector.

\subsection{Links Connected to Terminal Nodes}

Consider a terminal node $v_o\in\Vc$;  
then Node~$v_o$ has no outgoing edges ($\Out(v_o)=\emptyset$).  
Let $p=|\In(v_o)|$ denote the number of edges incoming to $v_o$, 
and let $W_1,W_2,\ldots,W_p$ 
denote the messages carried by these links. 
Further, assume that the link corresponding 
to the message $W_1$ has capacity $\d$. 
For any $s\in\b(v_o)$, 
\begin{align}
&I(M_s;W_2,\ldots,W_p)\nonumber\\
&=I(M_s;W_1,W_2,\ldots,W_p)-I(M_s;W_1|W_2,\ldots,W_p)\nonumber\\
&\geq I(M_s;W_1,W_2,\ldots,W_p)-H(W_1)\nonumber\\
&\geq I(M_s;W_1,W_2,\ldots,W_p)-n\d.\nonumber
\end{align}
This proves that removing this link 
reduces the capacity from source $s$ 
to node $v$ by at most $\d$. 
Since Node $v$ has only incoming edges, 
this change does not affect the capacities 
at any other nodes in the network.  
As a result, 
applying, for each $s\in\sigma(v)$, 
an outer code with rate $R_s-\d$ 
and codewords drawn uniformly at random 
yields expected error probability 
approaching 0 as the coding dimension grows without bound.  
This proves the existence of a good collection of codes.  
Therefore, $\mathbf{R}=(R_s:s\in\Sc)\in\Rc(\Nc,\Sc,\a,\b)$, 
implies $\mathbf{R}'=(R_s':s\in\Sc)\in\Rc(\Nc',\Sc,\a,\b)$, 
where $R_s'=R_s$ for all $s\in\Sc\setminus\sigma(v)$ 
and   $R_s'=(R_s-\d)^+$ for all $s\in\sigma(v)$.  

\subsection{Super Source Node}

For the case where all the sources 
are available to a \emph{ super source node} 
($\sigma(v_o)=\Sc$ for some $v_o\in\Vc$, 
as shown in Fig.~\ref{fig:net_colloc}), 
we showed in~\cite{EffrosH:10} 
that changing the capacity of any link $e\in\Ec$ 
from $C_e\geq\d$ to $C_e'=C_e-\d$ 
changes the network capacity region by at most $\d$ in each dimension 
(i.e., $\mathbf{R}\in\Rc(\Nc,\Sc,\a,\b)$ 
implies $(\mathbf{R}-\d\cdot\mathbf{1})^+\in\Rc(\Nc',\Sc,\a,\b)$.  

\begin{figure}[h]
\begin{center}
\psfrag{s}[c]{$v_o$}
\psfrag{C1}[c]{$\d$}
\psfrag{C2}[l]{$C$}
\psfrag{N}[l]{$\Nc$}
\psfrag{M1}[l]{$M_1$}
\psfrag{M2}[l]{$M_2$}
\psfrag{Mk}[l]{$M_k$}
\psfrag{Mh1}[l]{$\Mh_1$}
\psfrag{Mh2}[l]{$\Mh_2$}
\psfrag{Mhk}[l]{$\Mh_k$}
\psfrag{dots}[l]{$\vdots$}
\includegraphics[width=7cm]{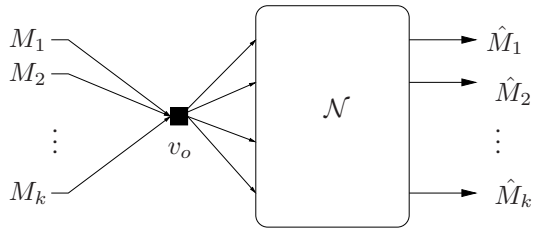}
\caption{All sources available directly at a super source node $v_o$}
\label{fig:net_colloc}
\end{center}
\end{figure}

\subsection{Linear Network Coding}

Consider a linear network code of block length $n$ 
and rate $\mathbf{R}=(R_s)_{s\in\Sc}$ 
operating on network $\Nc$. 
Let $e\in\Ec$ be a fixed link of capacity $C_e=\d$ inside this network. 
In this case, we treat both source messages 
and the messages traversing each link in the network 
as binary vectors.  
Since the code is linear, 
the message $W_e$ sent across link $e$ 
can be written as a linear combination 
of the source messages $\{M_s\}_{s\in\Sc}$.  
Precisely, 
\begin{align}
W_e=\sum\limits_{s\in\Sc}A_{s,e}M_s,\label{eq:linear}
\end{align} 
where for each $s\in\Sc$, $A_{s,e}$ 
denotes a binary matrix 
of dimension $nC_e\times nR_s$ 
and all additions in \eqref{eq:linear} are binary additions.  
Let $\Mc_0$ denote the set of messages 
that yield message $W_e=\mathbf{0}$ on link $e$ 
using the given linear code, \ie 
\[
\Mc_0\triangleq\{(M_s)_{s\in\Sc}:\sum\limits_{s\in\Sc}A_{s,e}M_s
=\mathbf{0}\}.
\]
If we restrict our attention to this subset of messages, 
then we can run the given linear code 
in the absence of edge $e$ 
since the value of $W_e$ for all such messages 
is fixed and known.  
Unfortunately, 
choosing messages from $\Mc_0$ 
may require coordination among the source nodes.  
We therefore choose messages from a subset of $\Mc_0$ 
that requires no such coordination.  
Namely, we transmit only messages from $\Mc_{00}$, 
where $\Mc_{00}$ is defined as 
\[
\Mc_{00}\triangleq\{(M_s)_{s\in\Sc}:A_{s,e}M_s=\mathbf{0}
\mbox{ for all }s\in\Sc\}.  
\]
By sending only messages $(M_s)_{s\in\Sc}\in\Mc_{00}$, 
we guarantee that $W_{e}={\mathbf{0}}$;  
since $\Mc_{00}=\prod_{s\in\Sc}\{M_s:A_sM_s=0\}$, 
the source nodes can transmit only messages from $\Mc_{00}$ 
without coordination.  
The resulting rate is 
$(1/n)\log|\{M_s:A_sM_s=\mathbf{0}\}|\geq(R_s-\d)^+$ 
for each $s\in\Sc$.   
Thus we can apply the code from $\Nc$ 
on the network $\Nc'$ 
to achieve reliable communication at rate 
$(\mathbf{R}-\d\cdot\mathbf{1})^+$.  

The given argument demonstrates that removing a single link 
of capacity $C_e=\d$ changes the rate achievable 
with linear coding by at most $\d$ in each dimension.  
The same argument can be used to 
show that reducing the capacity of some edge $e$ 
with $C_e>\d$ to $C_e'=C_e-\d$
reduces the rate achievable 
with linear coding by at most $\d$ in each dimension.  
This can be seen by treating a link of capacity $C_e>\d$ 
as a pair of parallel links 
of capacities $C_e-\d$ and $\d$, respectively, 
and applying the previous argument.  

Unfortunately, as noted in Section~\ref{sec:prior}, 
linear network codes are not sufficient 
for achieving the capacity of general error-free networks. 
Thus, the given strategy 
proves only that reducing the capacity of a link by $\d$ 
changes the set of rates achievable using linear coding 
by at most $\d$ in each dimension.  
If rate $\mathbf{R}$ is achievable using linear coding on $\Nc$, 
then rate $(\mathbf{R}-\d\cdot\mathbf{1})^+$ is achievable 
using linear coding on $\Nc'$.  
 
\subsection{Main Result}\label{sec:results}


Consider the $k$-unicast network $\Nc$ shown in Fig.~\ref{fig:net_2link}. 
Here, $\a(s)=v_s$ and $\b(v_{k+s})=\{s\}$
for all $s\in\Sc$; 
that is, each message $s\in\Sc$ is a unicast from node $v_s$ 
to node $v_{k+s}$.  
In a blocklength-$n$ code, 
$\Mc_s\in\{1,2,\ldots,2^{nR_s}\}$ 
denotes the source message for Source $s$, 
and $\Mh_s$ represents the reconstruction of $M_s$ 
at sink node $v_{k+s}$.  
When we remove the link $e$ of capacity $C_e=\d$ from $\Nc$, 
we obtain the network  $\Nc'$ shown in Fig.~\ref{fig:net_1link}.

\begin{figure}[h]
\begin{center}
\psfrag{a}[c]{$a$}
\psfrag{b}[l]{$b$}
\psfrag{c}[l]{}
\psfrag{d}[l]{}
\psfrag{C1}[c]{$\d$}
\psfrag{C2}[l]{$C$}
\psfrag{N1}[l]{$\Nc_1$}
\psfrag{N2}[l]{$\Nc_2$}
\psfrag{M1}[l]{$M_1$}
\psfrag{M2}[l]{$M_2$}
\psfrag{Mk}[l]{$M_k$}
\psfrag{Mh1}[l]{$\Mh_1$}
\psfrag{Mh2}[l]{$\Mh_2$}
\psfrag{Mhk}[l]{$\Mh_k$}
\psfrag{dots}[l]{$\vdots$}
\subfigure[Network $\Nc$]{\includegraphics[width=8.5cm]{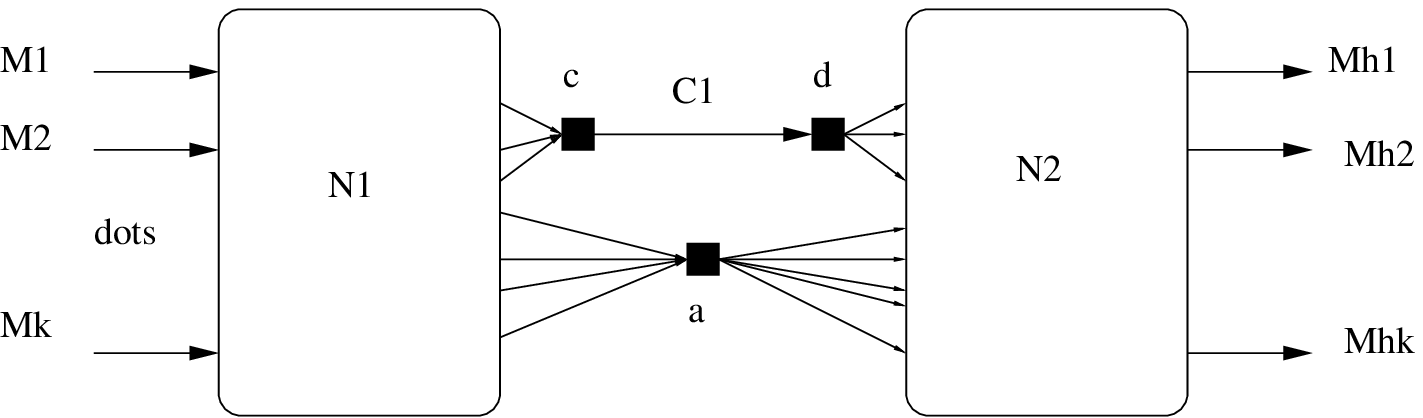}\label{fig:net_2link}}
\subfigure[Network $\Nc'$]{\includegraphics[width=8.5cm]{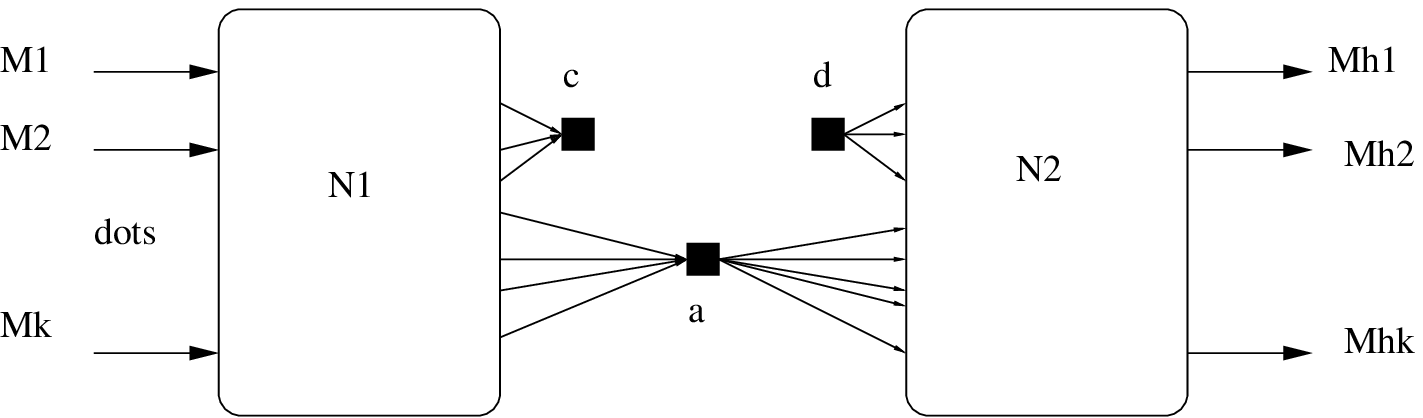}\label{fig:net_1link}}
\caption{A multiple unicast network with special structure} 
\end{center}
\end{figure}
\begin{theorem}\label{thm:main}
For any $\mathbf{R}\in\Rc(\Nc,\Sc,\a,\b)$, 
\[
(\mathbf{R}-\d\cdot\mathbf{1})^+\in\Rc(\Nc',\Sc,\a,\b).
\] 
\end{theorem}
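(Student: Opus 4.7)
My approach would be to combine a random coding argument with list decoding at the sinks, effectively reducing the problem to the super-source result of \cite{EffrosH:10} by viewing node $a$ as a virtual super-source for the downstream sub-network $\Nc_2$. Given any code $\Cc$ on $\Nc$ achieving $\mathbf{R}$ with vanishing error probability and any $\epsilon>0$, I would have each source $v_s$ (using shared randomness) sample an i.i.d.\ uniform random codebook $\phi_s:\{1,\ldots,2^{n(R_s-\delta-\epsilon)}\}\to\{1,\ldots,2^{nR_s}\}$ and transmit $\phi_s(m_s)$ through the $\Cc$ encoder at $v_s$ as if it were the original message $M_s$. The sub-network $\Nc_1$ then produces the pair $(W_\delta,W_C)$ at $a$ exactly as in $\Cc$, but in $\Nc'$ only $W_C$ reaches $b$.

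The decoder at each sink $v_{k+s}$ would be a list decoder over the missing $\delta$-edge message: it forms the list $\mathcal{L}_s=\{h_s(w_\delta,W_C):w_\delta\in\{0,1\}^{n\delta}\}$ of candidate reconstructions, where $h_s$ denotes the composition of $\Nc_2$'s original processing with the original sink decoder, and outputs the unique $\hat m_s$ such that $\phi_s(\hat m_s)\in\mathcal{L}_s$. Since $|\mathcal{L}_s|\leq 2^{n\delta}$, a union bound over the incorrect small messages $m_s'\neq m_s$ gives a collision probability of at most $2^{n(R_s-\delta-\epsilon)}\cdot 2^{n\delta-nR_s}=2^{-n\epsilon}$, which vanishes as $n\to\infty$, while correctness of $\Cc$ guarantees that at the actual value $w_\delta=W_\delta$ the true codeword $\phi_s(m_s)$ lies in $\mathcal{L}_s$. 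Letting $\epsilon\downarrow 0$ would then yield achievability of $(\mathbf{R}-\delta\mathbf{1})^+$ on $\Nc'$.

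The hard part, and where the special structure of the network in Fig.~\ref{fig:net_2link} is essential, is supplying each sink with enough information to actually evaluate $h_s(w_\delta,W_C)$ for every candidate $w_\delta$. In $\Nc$ the sink only observes a function of $(W_\delta,W_C)$ computed by $\Nc_2$, and naively running $\Nc_2$'s original code in $\Nc'$ with a single placeholder for $W_\delta$ will not in general reveal the full list. The cleanest resolution is to modify $\Nc_2$'s internal code so that $W_C$ is effectively multicast from $b$ to each sink, allowing the sink to compute its list locally; justifying that this multicast (or an equivalent list-delivery strategy) is feasible within $\Nc_2$'s capacity constraints is the core technical step that leans on the specific topology of Fig.~\ref{fig:net_2link}, since in a general $\Nc_2$ the min-cut from $b$ to some $v_{k+s}$ could be strictly smaller than $C$.
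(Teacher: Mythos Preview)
Your list-decoding idea has a genuine gap that you yourself flag but do not close: sink $v_{k+s}$ cannot form the list $\mathcal{L}_s=\{h_s(w_\delta,W_C):w_\delta\}$ because it never sees $W_C$, only a single output of $\Nc_2$ run with whatever placeholder $w_\delta^0$ is chosen. Your proposed remedy---redesigning $\Nc_2$'s code to multicast $W_C$ to every sink---fails in general, since $\Nc_2$ is an \emph{arbitrary} sub-network and the min-cut from $b$ to $v_{k+s}$ may be only on the order of $R_s$, far below $C$. Nothing in the special structure of Fig.~\ref{fig:net_2link} rules this out; that structure only guarantees that all source--sink paths pass through node $a$, not that each sink has a large cut from $b$. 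So the ``core technical step'' you defer is in fact unavailable as stated, and the argument does not go through.

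The paper takes a completely different route that avoids this obstacle. Rather than trying to deliver a list to the sinks, it lets node $a$ act as a decode-and-forward relay: it treats $\Nc_1$ (run with the original inner code) as a $k$-user MAC from the independent sources to node $a$'s input vector $\Wv_i$, and treats the downstream network (run with the original inner code but with $W_e$ frozen to a well-chosen value $w_e$) as a $k$-user \emph{deterministic} broadcast channel from node $a$'s remaining outputs to the sink reconstructions $(\Mh_1,\ldots,\Mh_k)$. Fano's inequality together with the $n\delta$-bit bound $H(W_e)\le n\delta$ shows that $\mathbf{R}-\delta\cdot\mathbf{1}$ lies both in the MAC region and in the deterministic-BC region (the latter via the Marton--Pinsker characterization), so a new MAC outer code at the sources, decoding at $a$, re-encoding at $a$ with a BC code, and new BC decoders at the sinks together achieve $(\mathbf{R}-\delta\cdot\mathbf{1})^+$ on $\Nc'$. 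The crucial difference from your attempt is that the sinks never need to enumerate the missing $W_e$: fixing $W_e=w_e$ once and for all turns the downstream network into an honest deterministic BC whose capacity region is known, and the required bounds on $H(\Mh_{\Ac}\mid W_e=w_e)$ follow by averaging.
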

\begin{proof}
Fix $\mathbf{R}=(R_1,R_2,\ldots,R_k)\in\Rc(\Nc,\Sc,\a,\b)$.  
We first consider the case where $\min\{R_1,\ldots,R_k\}\geq \d$.
Given a code of blocklength $n$, 
for each $s\in\Sc$, 
let $P_{e,s}^{(n)}\triangleq\P(M_s\neq \Mh_s)$ 
denote the error probability in reconstructing source $s$ at sink $v_{k+s}$. 
For any $p\in[0,1]$, 
let $h(p)=-p\log(p)-(1-p)\log(1-p)$ be the binary entropy function. 
Since $\mathbf{R}$ is  achievable on $\Nc$, 
for any $\e>0$ and $n$ large enough 
there exists a rate-$\mathbf{R}$ code of blocklength $n$ 
such that $\max\{P_{e,s}^{(n)}:s\in\Sc\}\leq \e$ 
and $\max\{h(P_{e,s}^{(n)}):s\in\Sc\}\leq \e$.  
Given any $\e>0$, fix such a code.  
We next use this family of codes to prove the existence 
of a multiple access code for communicating the sources 
from nodes $v_1,\ldots,v_k$ to node $a$ 
and a broadcast code for transmitting all sources $s\in\Sc$ 
from Node $a$ to nodes $v_{k+1},\ldots,v_{2k}$, respectively, 
both at rates $\mathbf{R}-\d\cdot\mathbf{1}$.  
In the arguments that follows, 
we use $W_e$, $\Wv_i$, and $\Wv_o$ to denote 
the message sent through the link $e$ of capacity $C_e=\d$, 
the inputs of Node $a$, and 
the outputs of Node $a$, respectively (see Fig.~\ref{fig:net_2link}).

Consider the $k$-user multiple access channel 
with inputs $\Mv=(M_1,M_2,\ldots,M_k)$ and output $\Wv_i$. 
The capacity region of this $k$-user MAC is the set of rate vectors 
$\rv=(r_1,r_2,\ldots,r_k)$ satisfying 
\begin{align}
\sum_{s\in\Ac}r_s\leq I(\Mv_{\Ac};\Wv_i|\Mv_{\Ac^c},Q),\nonumber
\end{align}
for all $\Ac\subseteq\Sc$ and some 
\[
p(q)p(m_1|q)p(m_2|q)\ldots p(m_k|q).
\]
Define  
\[
\rv_{\rm{mac}}\triangleq(I(M_1;\Wv_i),I(M_2,\Wv_i),\ldots,I(M_k;\Wv_i)),
\] 
under the distribution imposed by the code fixed above.  
In the argument that follows, we first show that $\rv_{\rm{mac}}$ 
falls in the capacity region of the MAC 
and then prove that $\rv_{\rm{mac}}$ satisfies the desired rate constraint.  

Since the messages $M_1,\ldots,M_k$ are independent, 
for any sets $\Ac\subseteq\Sc$ and $\Ac^c=\Sc\setminus\Ac$,
\begin{align}
\sum\limits_{s\in\Ac} r_{{\rm mac},s} &= \sum\limits_{s\in\Ac} I(M_s;\Wv_i)\nonumber\\
&= \sum\limits_{s\in\Ac} [H(M_s)-H(M_s|\Wv_i)]\nonumber\\
&=H(M_{\Ac})-\sum\limits_{s\in\Ac}H(M_s|\Wv_i)\nonumber\\
&\leq  H(M_{\Ac}) -H(M_{\Ac}|\Wv_i)\nonumber\\
&\leq  H(M_{\Ac}) -H(M_{\Ac}|\Wv_i,M_{\Ac^c})\nonumber\\
& =I(M_{\Ac};\Wv_i|M_{\Ac^c}).\nonumber
\end{align} 
Thus, $\rv_{\rm{mac}}$ falls in the capacity region of the MAC.  

We next bound each term in $\rv_{\rm{mac}}$.  
For each $s\in\Sc$, 
\begin{align}
H(M_s|\Wv_i)&\leq H(M_s,W_e|\Wv_i)\nonumber\\
&= H(M_s|W_e,\Wv_i)+H(W_e|\Wv_i)\nonumber\\
&\leq nR_sP_{e,s}^{(n)} +h(P_{e,s}^{(n)})+n\d \nonumber
\end{align}
by Fano's inequality~\cite{cover}.
Hence,
\begin{align} 
I(M_s;\Wv_i) &= H(M_s)-H(M_s|\Wv_i)\nonumber\\ 
&\geq n(R_s-\d) - nR_s\e-\e,\label{eq:mac}
\end{align}
since $\max\{P_{e,s}^{(n)},h(P_{e,s}^{(n)})\}\leq \e$ by assumption.
Recall that $\e>0$ is arbitrary;  
thus \eqref{eq:mac} implies that 
$(\mathbf{R}-\d\cdot\mathbf{1})$
is achievable on the described MAC. 

We next deliver these messages 
to their intended receivers 
using the broadcast channel (BC)  
from Node $a$ to the sinks  $v_{k+1},\ldots,v_{2k}$.  
Again, we apply the previously chosen code, 
operating the code in the absence of edge $e$ 
by sending only source messages for which the message across edge $e$ 
is a fixed value $w_e$ to be chosen next.  

Note that 
\begin{align}
H(\Mv)&=H(\hat{\Mv})+H(\Mv|\hat{\Mv})\nonumber
\end{align}
\begin{align}
&= H(\hat{\Mv})+ \sum_{s=1}^kH(M_s|M^{s-1},\hat{\Mv})\nonumber\\
&\leq H(\hat{\Mv})+ \sum_{s=1}^kH(M_s|\hat{M}_s)\nonumber\\
&\stackrel{(a)}{\leq} H(\hat{\Mv}) + \sum_{s=1}^k( h(P_{e,s}^{(n)})+ nR_s P_{e,s}^{(n)})\nonumber\\
&\stackrel{(b)}{\leq} H(\hat{\Mv}) +k\e+n\e \sum_{s=1}^k R_s.\label{eq:1}
\end{align}
where $\rm (a)$ and $\rm (b)$  follow from the Fano's inequality \cite{cover}, and our initial assumption, respectively. Hence, from \eqref{eq:1},
\begin{align}
H(\hat{\Mv}) &\geq  H({\Mv}) -k\e- n\e \sum_{s=1}^k R_s\nonumber\\
&= (1-\e )n\sum_{s=1}^k R_s-k\e.\label{eq:1-1}
\end{align}

On the other hand, we have
\begin{align}
H(\hat{\Mv}|W_e)&=H(\hat{\Mv},W_e)-H(W_e)\nonumber\\
&\geq H(\hat{\Mv})-H(W_e)\nonumber\\
&\geq H(\hat{\Mv})-n\d.\label{eq:2}
\end{align}
Therefore, combining \eqref{eq:1-1} and \eqref{eq:2}, it follows that 
\begin{align}
H(\hat{\Mv}|W_e) \geq (1-\e )n\sum_{s=1}^k R_s-k\e -n\d.\label{eq:3}
\end{align}
Since $H(\hat{\Mv}|W_e)=\sum_{w_e\in\Wc_e}H(\hat{\Mv}|W_e=w_e)p(w_e)$, there exists some $w_e\in\Wc_e$ such that 
\begin{align}
H(\hat{\Mv}|W_e=w_e)&\geq (1-\e )n\sum_{s=1}^k R_s-k\e -n\d.\label{eq:4}
\end{align}
%
%
Fixing the message $W_e$ to a value of $w_e$ 
that satisfies \eqref{eq:4}, 
we get a $k$-user deterministic broadcast channel (BC) \cite{cover} 
with input $\Wv_o$  and outputs $(\Mh_1,\ldots,\Mh_k)$. 
Appendix A summarizes prior results 
on the capacity region for this BC, 
which achieves reliable transmission 
at all rates $\rv=(r_1,r_2,\ldots,r_k)$ for which
\begin{align}
\sum_{s\in\Ac}r_s\leq H(\Mh_{\Ac}|W_e=w_e),\nonumber
\end{align}
for all $\Ac\subseteq\Sc$. 
We now prove that this set of rates 
includes the rate $\rv_{\rm bc}=n(\mathbf{R}-\d\cdot\mathbf{1})$. 
For any $\Ac\subseteq\Sc$, we have
\begin{align}
& H(\Mh_{\Ac}|W_e=w_e)+H(\Mh_{\Ac^c}|W_e=w_e)\nonumber\\
&\geq H(\hat{\Mv}|W_e=w_e).\label{eq:entropy_upper_bound_out}
\end{align} 
But $H(\Mh_{\Ac^c}|W_e=w_e)\leq \sum_{s\in\Ac^c}nR_s$. 
Hence, combining \eqref{eq:4} 
and \eqref{eq:entropy_upper_bound_out},
\begin{align}
H(\Mh_{\Ac}|W_e=w_e)\geq  n\sum_{s\in\Ac}R_s- n\sum_{s=1}^kR_s\e-k\e-n\d.\nonumber
\end{align} 
Thus, since $\e$ is arbitrary, 
$n(\mathbf{R}-\d\cdot\mathbf{1})$ is achievable 
on the given BC.  
This implies that the messages 
received by node $a$ at rate $\rv_{\rm mac}$ 
can be delivered to their intended receivers, 
which concludes the proof for the case where $R_s>\d$ for all $s\in\Sc$.

Finally, note that if there are some sources with $R_s\leq \d$, 
then we can use the same argument 
by sending constant messages for all such sources 
in both the MAC and the BC.  
\end{proof}

A special case of the network shown in Fig.~ \ref{fig:net_2link} 
is  shown in Fig.~\ref{fig:net_no_link}. 
Theorem \ref{thm:main} immediately applies.  

\begin{figure}[h]
\begin{center}
\psfrag{a}[c]{}
\psfrag{b}[l]{}
\psfrag{c}[l]{}
\psfrag{d}[l]{}
\psfrag{C1}[c]{$\d$}
\psfrag{C2}[l]{$C$}
\psfrag{N1}[l]{$\Nc_1$}
\psfrag{N2}[l]{$\Nc_2$}
\psfrag{M1}[l]{$M_1$}
\psfrag{M2}[l]{$M_2$}
\psfrag{Mk}[l]{$M_k$}
\psfrag{Mh1}[l]{$\Mh_1$}
\psfrag{Mh2}[l]{$\Mh_2$}
\psfrag{Mhk}[l]{$\Mh_k$}
\psfrag{dots}[l]{$\vdots$}
\includegraphics[width=8.5cm]{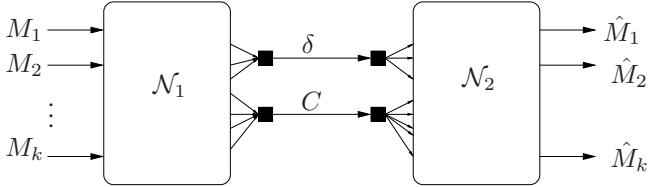}
\caption{A special case of the network shown in Fig. \ref{fig:net_2link}}\label{fig:net_no_link}
\end{center}
\end{figure}

Note that Theorem \ref{thm:main} can also be used to derive an outer bound on the capacity region of the $k$-unicast network $\Nc$ shown in Fig. \ref{fig:net_2link}. Let $\Rc_1\triangleq\Rc(\Nc_1,\Sc,\a_1,\b_1)$ and $\Rc_2\triangleq\Rc(\Nc_2,\Sc,\a_2,\b_2)$ denote the capacity regions of the networks $\Nc_1$ and $\Nc_2$ shown in Fig. \ref{fig:net_1link}, with $\a_1(s)=v_s$, $\a_2(s)=a$ and $\b_2(v_{s+k})=s$, for $s\in\Sc$. Moreover, $\b_1(a)=\Sc$, $\b_1(v)=\emptyset$ for $v\in\Vc\backslash a$, and  $\b_2(v)=\emptyset$ for $v\in\Vc\backslash\{v_{k+1},\ldots,v_{2k}\}$. Note that $\Rc_1$ and $\Rc_2$ correspond to a multicast network and a single source network with non-overlapping demands, respectively. Hence, as mentioned before, in both cases the capacity regions are computable and are fully characterized by the cut-set bounds \cite{KoetterM:03}. 
\begin{corollary} 
Let $\Rc_o\triangleq \{R+\d\cdot\mathbf{1}: R\in\Rc_1\cap\Rc_2 \}$. Then,
\[
\Rc(\Nc,\Sc,\a,\b)\subseteq\Rc_o.
\]
\end{corollary}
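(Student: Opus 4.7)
The plan is to deduce the corollary from Theorem~\ref{thm:main} by establishing that any rate vector achievable on $\Nc'$ is simultaneously achievable on both $\Nc_1$ and $\Nc_2$. Fix $\mathbf{R}\in\Rc(\Nc,\Sc,\a,\b)$. Theorem~\ref{thm:main} immediately gives $(\mathbf{R}-\d\cdot\mathbf{1})^+\in\Rc(\Nc',\Sc,\a,\b)$, so the remaining task is to show $\Rc(\Nc',\Sc,\a,\b)\subseteq\Rc_1\cap\Rc_2$. Together with the downward-closed nature of $\Rc_1$ and $\Rc_2$ (and the harmless inclusion of extra slack for any coordinate with $R_s<\d$), this yields $\mathbf{R}-\d\cdot\mathbf{1}\in\Rc_1\cap\Rc_2$, and hence $\mathbf{R}\in\Rc_o$.

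The key structural fact about $\Nc'$ is that node $a$ is a vertex cut: every path from a source $v_s$ to its sink $v_{k+s}$ passes through $a$. Consequently, in any blocklength-$n$ code on $\Nc'$, the outgoing messages of $a$ on its $\Nc_2$-side edges are deterministic functions of its incoming messages on its $\Nc_1$-side edges, and each sink reconstruction $\Mh_s$ is in turn a deterministic function of those $\Nc_2$-side outputs. I will exploit this twice. First, to get $\Rc(\Nc')\subseteq\Rc_2$: in $\Nc_2$ node $a$ has direct access to $(M_s)_{s\in\Sc}$, so it can internally simulate the encoders of the $\Nc_1$ portion of the original code, reconstruct exactly the messages that $a$ would have received in $\Nc'$, feed them into the unchanged $\Nc_2$-side encoders, and let the sinks decode as before; the error probability is identical, so the same rate vector is achievable on $\Nc_2$.

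Second, for $\Rc(\Nc')\subseteq\Rc_1$, I observe that $a$'s incoming messages in $\Nc_1$ already determine, by the previous paragraph, every sink reconstruction. Hence node $a$ can internally simulate the $\Nc_2$-side encoders and each sink's decoding map to recover an estimate $\Mh_s$ of every source $s\in\Sc$ with the same vanishing error probability, which is precisely the multicast demand defining $\Rc_1$. The main obstacle is little more than bookkeeping: formalising the vertex-cut simulation so that block length, rates, and error probabilities transfer cleanly among $\Nc'$, $\Nc_1$, and $\Nc_2$, and handling the mild edge case where some $R_s<\d$ so that $(\mathbf{R}-\d\cdot\mathbf{1})^+$ differs from $\mathbf{R}-\d\cdot\mathbf{1}$. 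The genuine difficulty has already been absorbed into Theorem~\ref{thm:main}; everything after that is network surgery on an acyclic graph.
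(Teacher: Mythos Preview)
Your argument is correct and matches what the paper intends: the paper states the corollary without proof, presenting it as an immediate consequence of Theorem~\ref{thm:main} together with the observation that $\Nc'$ factors through node $a$ into the two subproblems defining $\Rc_1$ and $\Rc_2$. Your simulation arguments (node $a$ emulating the $\Nc_1$ encoders to obtain $\Rc(\Nc')\subseteq\Rc_2$, and emulating the $\Nc_2$ decoders to obtain $\Rc(\Nc')\subseteq\Rc_1$) are exactly the standard vertex-cut reductions one would expect, and your handling of the $R_s<\d$ edge case via downward closure is the right patch for a cosmetic issue in the corollary's statement rather than in your proof.
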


\section{Conclusion}\label{sec:conclusion}

In this paper we study the effect 
of a single link on the network coding capacity 
of a network of error-free bit pipes. 
For some special topologies of multi-source multi-sink networks, 
we prove that  our result from \cite{EffrosH:10} continues to hold; 
that is, reducing the capacity of a link by $\d$ 
changes the capacity region by at most $\d$ in each dimension. 
The question of whether or not this result holds 
for all networks remains an open area for future research.

\renewcommand{\theequation}{A-\arabic{equation}}
\setcounter{equation}{0}  

\section*{APPENDIX A\\ Deterministic broadcast channel}  \label{app1} 

A $k$-user deterministic broadcast channels (DBC) 
with input $x\in\Xc$ and outputs $\{Y_s\in\Yc_s\}_{s\in\Sc}$ 
is a $k$-user broadcast channel such that 
for any $x\in\Xc$ and $(y_1,\ldots,y_k)\in\Yc_1\times\Yc_2\times\ldots\times\Yc_k$,
\begin{align}
\P((Y_1,\ldots,Y_k)=(y_1,\ldots,y_k)|X=x)\in\{0,1\}.\label{eq:a_1}
\end{align}
Since the capacity region of a BC depends only 
on the receivers' conditional marginal distributions \cite{cover}, 
\eqref{eq:a_1} implies that a $K$-user DBC 
can be described by $k$ functions $(f_1,\ldots,f_k)$,  \[f_s:\Xc\to\Yc_s,\]
such that  $Y_s=f_s(X)$ for $s\in\Sc$.

While the capacity region for general BCs remains unsolved, 
the capacity region of a $k$-user DBC is known 
and can be described by the union of the set of 
rates $(R_1,R_2,\ldots,R_k)$ satisfying 
\begin{align}
\sum\limits_{s\in\mathcal{A} }R_s \leq H(Y_{\mathcal{A}}),\nonumber
\end{align}
for any $\mathcal{A}\subseteq\{1,\ldots,k\}$, for some $P(X)$  \cite{Marton:77,Pinsker:78} .

\section*{Acknowledgments}
This work was supported in part by 
Caltech's Center for the Mathematics of Information (CMI), 
DARPA ITMANET grant W911NF-07-1-0029, 
the Air Force Office of Scientific Research 
under grant FA9550-10-1-0166, 
and Caltech's Lee Center for Advanced Networking.

\bibliographystyle{unsrt}
\bibliography{myrefs}

\end{document}